\newcommand{\bb}[1]{\mathbb{#1}}
\newcommand{\ov}{\dot{1}}
\newcommand{\tc}[2]{\textcolor{#1}{#2}}
\newtheorem{theorem}{Theorem}
\newtheorem{proposition}[theorem]{Proposition}
\newtheorem{corollary}[theorem]{Corollary}
\newtheorem{question}[theorem]{Question}
\begin{document}

\begin{frontmatter}

\title{On Orthogonal Vector Edge Coloring}

\author[ufc]{Allen~Ibiapina MSc.}
\author[ufc]{Ana~Silva PhD\fnref{t2}}

\address[ufcmat]{ParGO, Departamento de Matem\'atica, Universidade Federal do Cear\'a, Fortaleza, Brazil}
\tnotetext[t1]{Funded by Conselho Nacional de Desenvolvimento Científico e Tecnológico - CNPq, grants: 304576/2017-4; 437841/2018-9; 425297/2016-0; 401519/2016-3.}

\fntext[emails]{anasilva@mat.ufc.br; allenr.roossim@gmail.com}



\begin{abstract}
Given a graph $G$ and a positive integer $d$, an orthogonal vector $d$-coloring of $G$ is an assignment $f$ of vectors of $\bb{R}^d$ to $V(G)$ in such a way that adjacent vertices receive orthogonal vectors. The orthogonal chromatic number of $G$, denoted by $\chi_v(G)$, is the minimum $d$ for which $G$ admits an orthogonal vector $d$-coloring. This notion has close ties with the notions of Lov\'asz Theta Function, quantum chromatic number, and many other problems, and  even though this and related metrics have been extensively studied over the years, we have found that there is a gap in the knowledge concerning the edge version of the problem. In this article, we discuss this version and its relation with other insteresting known facts, and pose a question about the orthogonal chromatic index of cubic graphs.
\end{abstract}

\begin{keyword}
Orthogonal vector coloring; cubic graphs; quantum chromatic number; planar graphs; Kocken-Specker sets.
\end{keyword}
\end{frontmatter}


\section{Introduction}

We assume that the reader knows the basic terminology in Graph Theory and Linear Algebra; for an introduction on the subjects, we refer the reader to~\cite{Strang.book,W.book}. All the graphs are considered to be finite and simple, unless said otherwise.

A \emph{proper coloring} of a graph $G=(V,E)$ is a function $f \colon V \to \bb{N}$ such that $f(u)\neq f(v)$ whenever $uv \in E$. Because we only work with proper colorings in this paper, from now on we call it simply a \emph{coloring}; also, if $f$ uses $k$ colors, we refer to it as a \emph{$k$-coloring}. The \emph{chromatic number} of $G$, denoted by $\chi(G)$, is the minimum integer $k$ for which $G$ admits a $k$-coloring; and if $\chi(G)=k$, we say that $G$ is \emph{$k$-chromatic}. Deciding whether $\chi(G)\le k$ is $\NP$-complete even when $k=3$ and $G$ is a triangle-free graph with maximum degree at most~4~\cite{MP.99}.

An \emph{orthogonal vector $d$-coloring of $G$} (from now on called simply orthogonal $d$-coloring) is an assignment $f$ of vectors of $\bb{R}^d$ to $V(G)$ in such a way that adjacent vertices receive orthogonal vectors. This was originally defined in terms of the non-edges (i.e., non-adjacent vertices receive orthogonal vectors), and it was introduced to approach the Shannon capacity function, a problem arising in information theory~\cite{L.79}. Since then, it has been extensively studied~(see \cite{L.book}, Chapter~9 for an overview), as well as many other notions stemming from it, as for example the quantum chromatic number~\cite{GW.02,AHKS.2006,CMNS.06}.

One of the many investigated aspects when dealing with these representations is the minimum dimension for which it exists. Here, we call this parameter the \emph{orthogonal number of $G$} and denote it by $\pi(G)$. Observe that, given a $d$-coloring  $f$ of $G$, one can obtain an orthogonal $d$-coloring from $f$ by associating each color with a vector in the canonical base of $\mathbb{R}^{d}$. Therefore, 
\begin{equation}\pi(G)\leq \chi(G).\label{eq:pileqchi}\end{equation}

Little is known about the orthogonal number of a graph, and some stronger constraints have been imposed on the definition, as for instance, the \emph{faithful} orthogonal $d$-colorings~\cite{CSW.11,CELP.12}, and the \emph{general position} orthogonal $d$-coloring~\cite{LSS.89.00}. 

A natural and interesting question about the dimension is whether the inequality above is tight. Examples where the inequality is strict are not easy to find, and up to our knowledge, all the known examples are related to quantum mechanics and \emph{pseudo-telepathy games}. In~\cite{AHKS.2006}, they prove that most Hadamard graphs have this property, in~\cite{CMNS.06} they give an explicit example in~18 vertices, and in~\cite{SS.12} they give a more general result that produce smaller graphs than the Hadamard graphs (the smaller Hadamard graph $G$ having $\pi(G)<\chi(G)$ has 1609 vertices). The latter article relates the concepts of orthogonal coloring and quantum coloring to Kochen-Specker sets, a notion used initially to prove the inadequacy of using determinism to model quantum mechanics~\cite{KS.1967}. 

A \emph{Kochen-Specker set} is a set $S\subseteq \bb{R}^d$ for which there is no function $f:\bb{R}^d\rightarrow \{0,1\}$ such that, for every orthonormal bases $B\subseteq S$, we have $\sum_{u\in B}f(u)=1$. We mention that these sets are actually defined for general Hilbert spaces (we refer the reader to~\cite{HPSW.10} for a fine explanation), but that the restriction made will be enough for our purposes. Also, the proof given below has appeared before in more general contexts, but we reproduce it here for the sake of clarity. 




Given any set $S\subseteq \bb{R}^{d}$, the \emph{orthogonality graph of S}, denoted by $K(S)$, has $S$ as vertex set and is such that $u$ and $v$ are adjacent if and only if $u$ is orthogonal to $v$. 

\begin{theorem}
Let $S\subseteq \bb{R}^{d}$ be a Kochen-Specker set containing an orthonormal base of $\bb{R}^d$. Then, \[\chi(K(S))>\pi(K(S)) = d.\]
\end{theorem}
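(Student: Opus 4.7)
The plan is to establish the equality $\pi(K(S))=d$ and the strict inequality $\chi(K(S))>d$ separately; the first part is purely linear-algebraic, while the second is where the Kochen--Specker hypothesis enters.

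For the upper bound $\pi(K(S))\le d$, I would simply use the tautological assignment $f(v)=v$ for every $v\in S$. Since adjacency in $K(S)$ is by definition orthogonality of the vectors themselves, this is an orthogonal $d$-coloring. For the matching lower bound, let $B_0\subseteq S$ be the orthonormal basis guaranteed by the hypothesis: its $d$ elements are pairwise orthogonal and nonzero, so they form a clique of size $d$ in $K(S)$. Any orthogonal vector coloring must then assign $d$ pairwise orthogonal nonzero vectors to the vertices of $B_0$, and such vectors are linearly independent, forcing the ambient dimension to be at least $d$. (This uses the standard convention, implicit in the setup of the paper, that orthogonal colorings use nonzero vectors; otherwise $\pi$ would collapse to $1$.)

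The core of the argument is $\chi(K(S))>d$, which I would prove by contrapositive against the Kochen--Specker property. Assume $\chi(K(S))\le d$ and fix a proper coloring $c\colon S\to\{1,\dots,d\}$. Define $f\colon\bb{R}^d\to\{0,1\}$ by setting $f(v)=1$ if $v\in S$ and $c(v)=1$, and $f(v)=0$ otherwise. Now for any orthonormal basis $B\subseteq S$, the set $B$ has exactly $d$ elements and is a $d$-clique of $K(S)$, so the proper coloring $c$ uses each of the $d$ colors \emph{exactly once} on $B$. In particular, exactly one vertex of $B$ is mapped to $1$ by $f$, yielding $\sum_{u\in B}f(u)=1$. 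This contradicts $S$ being a Kochen--Specker set and completes the proof.

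I do not foresee any real obstacle: the bounds on $\pi(K(S))$ are classical, and once one observes that an orthonormal basis contained in $S$ is simultaneously a $d$-clique of $K(S)$ and has exactly $d$ elements, the function $f$ built from a single color class of any hypothetical $d$-coloring is automatically a Kochen--Specker functional. The only mild point to watch is that the assumption ``$B\subseteq S$'' in the definition of Kochen--Specker sets ranges over orthonormal bases of $\bb{R}^d$ entirely contained in $S$, which is exactly what the chromatic argument controls.
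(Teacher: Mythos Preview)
Your proof is correct and follows essentially the same route as the paper: the bounds on $\pi(K(S))$ via the tautological coloring and the clique $B_0$, and the contradiction to a hypothetical $d$-coloring via the indicator of color class~$1$. The only cosmetic difference is that you make the upper bound $\pi(K(S))\le d$ explicit, whereas the paper leaves it implicit.
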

\begin{proof}
Let $G=K(S)$. First, observe that any orthogonal base of $\bb{R}^d$ is a clique in $G$, and must receive orthogonal vectors in any orthogonal coloring of $G$; hence, because $S$ contains an orthonormal base of $\bb{R}^d$, we get $\pi(G)\ge d$. 

Now, suppose by contradiction that $G$ has a $d$-coloring $c$, and let $f:\bb{R}^d\rightarrow \{0,1\}$ assign value $1$ to every $u$ in color class 1, and 0 to the remaining vectors. Consider an orthonormal base $B$ of $\bb{R}^{d}$ contained in $S$. Again, we know that $B$ is a clique of size $d$ in $G$, and therefore must contain exactly one vertex of each color class. This implies that $\sum_{u\in B}f(u) = 1$, contradicting the definition of a Kochen-Specker set.
\end{proof}


As already mentioned, finding a Kochen-Specker set is no trivial task, the first being presented as a proof of the famous Kochen-Specker Theorem~\cite{KS.1967} and containing 117 vectors. Now, it is known that any such set in $\bb{R}^3$ and $\bb{R}^4$ have at least~18 vertices~\cite{CEG.96,PMMM.05_2,PMMM.05}. It is also interesting to point out that in~\cite{HPSW.10} the authors show a subset $S\subseteq \bb{R}^4$ of the Kochen-Specker set presented in~\cite{P.91} for which $|S|=17$ and $4=\pi(K(S))<\chi(K(S))=5$. Because, as mentioned before, every Kochen-Specker set in dimension~4 must have at least 18~vertices, we know that $S$ is not a Kochen-Specker set, thus showing that being a Kochen-Specker set is not a necessary condition to have $\pi(K(S))<\chi(K(S))$.

Now, given that finite Kochen-Specker sets are known, we get that a (simple) graph $G$ with  $\chi(G)>\pi(G)$ is also known. From this graph, it is possible to construct graphs where the distance between $\chi(G)$ and $\pi(G)$ is as large as possible; it suffices to iteratively apply the join operation on two copies of $G$. Given graphs $G_1,G_2$, the \emph{join} of $G_1$ and $G_2$ is the graph obtained from $G_1\cup G_2$ by adding all possible edges between $V(G_1)$ and $V(G_2)$; it is denoted by $G_1\wedge G_2$. 

\begin{proposition}
Let $G_1, G_2$ be graphs. Then, \[\pi(G_1\wedge G_2) = \pi(G_1) + \pi(G_2).\]
\end{proposition}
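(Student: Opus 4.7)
The plan is to prove the two inequalities $\pi(G_1\wedge G_2)\le \pi(G_1)+\pi(G_2)$ and $\pi(G_1\wedge G_2)\ge \pi(G_1)+\pi(G_2)$ separately, using a direct construction for the upper bound and an orthogonal decomposition argument for the lower bound.

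For the upper bound, I would take optimal orthogonal colorings $f_1\colon V(G_1)\to \bb{R}^{d_1}$ and $f_2\colon V(G_2)\to \bb{R}^{d_2}$ with $d_i=\pi(G_i)$, and embed them into $\bb{R}^{d_1+d_2}$ using disjoint coordinate blocks: send $v\in V(G_1)$ to $(f_1(v),\mathbf{0})$ and $v\in V(G_2)$ to $(\mathbf{0},f_2(v))$. Edges inside each $G_i$ remain mapped to orthogonal pairs, and any pair consisting of one vertex from each side is orthogonal because the two images have disjoint supports. This gives an orthogonal $(d_1+d_2)$-coloring of $G_1\wedge G_2$.

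For the lower bound, let $f$ be an optimal orthogonal $d$-coloring of $G_1\wedge G_2$ with $d=\pi(G_1\wedge G_2)$, and set $W_i=\operatorname{span}\{f(v):v\in V(G_i)\}$ for $i\in\{1,2\}$. Because every edge of the join forces each vector of $f(V(G_1))$ to be orthogonal to each vector of $f(V(G_2))$, the subspaces $W_1$ and $W_2$ are orthogonal, and hence $\dim W_1+\dim W_2\le d$. Picking an orthonormal basis of each $W_i$ and writing the restriction $f|_{V(G_i)}$ in coordinates yields an orthogonal $(\dim W_i)$-coloring of $G_i$, so $\dim W_i\ge \pi(G_i)$. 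Combining gives $\pi(G_1)+\pi(G_2)\le \dim W_1+\dim W_2\le d$.

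The only subtle step is the lower bound: one must notice that the join condition does not merely produce orthogonality between individual pairs but forces orthogonality of the entire spans, which is exactly what makes the dimensions add. Everything else is bookkeeping with coordinates; no optimization over which vectors to pick is required, since restricting to the span and choosing an orthonormal basis is a well-defined, orthogonality-preserving operation.
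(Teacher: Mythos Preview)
Your argument is correct and follows essentially the same route as the paper: a block-coordinate construction for the upper bound, and for the lower bound the observation that the spans $W_1,W_2$ are mutually orthogonal subspaces of $\bb{R}^d$, so their dimensions add up to at most $d$ while each $\dim W_i\ge\pi(G_i)$. If anything, your version is slightly cleaner in writing $\dim W_1+\dim W_2\le d$ rather than claiming equality with $d$.
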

\begin{proof}
For each $i\in\{1,2\}$, denote $V(G_i)$ by $V_i$, and let $f_{i} \colon V(G_{i}) \to \bb{R}^{d_{i}}$ be an orthogonal $d_{i}$-coloring of $G_{i}$. Also, let $G = G_{1}\wedge G_{2}$, and define $f \colon V(G)\to \bb{R}^{d_{1}+d_{2}}$ as follows. Set $f(v)=(f_{1}(v),0)$ if $v \in V_1$ and $f(v)=(0,f_{2}(v))$ if $v \in V_2$. One can verify that $f$ is an orthogonal $(d_{1}+d_{2})$-coloring of $G$; hence $\pi(G) \leq \pi(G_1) + \pi(G_2)$. 

Now let $f \colon V(G)\to \bb{R}^{d}$ be an orthogonal $d$-coloring of $G$. For every $u \in V_1$ and $v\in V_2$, we have that $f(u)$ is orthogonal to $f(v)$. In particular $f(v) \notin \operatorname{span}(f(V_1))$ and then $\operatorname{span}(f(V_1))^{\perp}\supseteq \operatorname{span}(f(V_2))$. Thus the dimension of $\operatorname{span}(f(V_1))\cup \operatorname{span}f(V_2)$ is equal to the dimension of $\operatorname{span}(f(V(G))$, which is equal to $d$. Because $f$ restriced to $V_i$ is an orthogonal coloring of $G_i$, we have that the dimension of $\operatorname{span}(f(V_i))$ is at least $\pi(G_{i})$ for each $i \in \{1,2\}$, and it follows that $d\geq \pi(G_{1})+\pi(G_{2})$. 
\end{proof}

The analogous equality applies to $\chi(G_1\wedge G_2)$ and is a folklore, easy to check, result. Now, if $G$ is such that $\chi(G)>\pi(G)$, and $H=G\wedge G$, then $\pi(H) = 2\pi(G)$ and $\chi(H) = 2\chi(G)$, i.e, $ \chi(H)- \pi(H) = 2(\chi(G)-\pi(G))$. Therefore, everytime we take the join, we double the gap between the chromatic number and the orthogonal number. 

\begin{corollary}\label{cor:biggap}
For every positive integer $k$, there exists $G$ such that $$\chi(G)-\pi(G)\ge 2^k.$$
\end{corollary}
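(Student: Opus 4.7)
The proof plan is essentially spelled out in the paragraph preceding the corollary, so I would just formalize it by induction. The idea is to start with a base graph $G_0$ for which the strict inequality $\chi(G_0)>\pi(G_0)$ holds, then iteratively take joins with copies of itself.

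First, I would invoke the earlier Theorem to obtain a graph $G_0 = K(S)$, where $S$ is a Kochen-Specker set containing an orthonormal base of $\bb{R}^d$ for some $d$ (such sets exist, e.g.\ the 117-vector set of Kochen and Specker). This gives $\chi(G_0) - \pi(G_0) \geq 1$, since both are integers and $\chi(G_0) > \pi(G_0)$. Then I would define the sequence $G_0, G_1, G_2, \dots$ recursively by $G_{i+1} = G_i \wedge G_i$.

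The inductive step uses two ingredients: the Proposition above, giving $\pi(G_i \wedge G_i) = 2\pi(G_i)$, and the folklore identity $\chi(G_i \wedge G_i) = 2\chi(G_i)$ (true because a coloring of $G_1 \wedge G_2$ must use disjoint color sets on the two sides). Subtracting these equalities yields $\chi(G_{i+1}) - \pi(G_{i+1}) = 2\bigl(\chi(G_i) - \pi(G_i)\bigr)$, so by a trivial induction
\[
\chi(G_k) - \pi(G_k) = 2^k \bigl(\chi(G_0) - \pi(G_0)\bigr) \geq 2^k,
\]
and taking $G = G_k$ finishes the proof.

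There is essentially no obstacle here, since all nontrivial work has already been carried out in the earlier Theorem and Proposition; the only thing to be careful about is confirming the folklore equality $\chi(G_1 \wedge G_2) = \chi(G_1) + \chi(G_2)$, which I would justify in one sentence by noting that the complete bipartite joining edges force disjoint palettes on the two parts, so optimal colorings of $G_1$ and $G_2$ on disjoint color sets combine to an optimal coloring of the join.
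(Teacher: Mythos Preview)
Your proposal is correct and follows exactly the approach sketched in the paragraph preceding the corollary: start from a graph $G_0$ with $\chi(G_0)>\pi(G_0)$ (supplied by a Kochen--Specker set via the earlier theorem), and iterate the join to double the gap at each step, using the Proposition for $\pi$ and the folklore identity for $\chi$. There is nothing to add; the paper itself leaves the corollary without a formal proof precisely because the argument is already contained in that preceding discussion.
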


Although, as we have seen, these aspects have been extensively investigated, we were unable to find any results on the related edge version of the problem (which is equal to investigate the parameter restricted to line graphs), and this is our main interest in this article. An \emph{orthogonal $d$-edge-coloring} of a graph $G$ is a function $f\colon E(G) \to \mathbb{R}^{d}$ such that $f(e)$ is orthogonal to $f(e')$ whenever $e$ and $e'$ are adjacent (i.e., $e\cap e'\neq \emptyset$). The minimum value $d$ such that there exists an orthogonal $d$-edge-coloring of $G$ is the \emph{orthogonal index of G} and is denoted by $\pi'(G)$. A \emph{$k$-edge-coloring}, the \emph{chromatic index} $\chi'(G)$ and a \emph{$k$-edge-chromatic graph} are defined similarly on the edges, and by arguments similar to the ones applied to vertices, we get that $$\pi'(G)\leq \chi'(G).$$

Observe that the edges incident to some vertex $v$ must be colored with $d(v)$ orthogonal vectors, and so $\Delta(G)\leq \pi'(G)$. Now, using Vizing's Theorem~\cite{V.64} we obtain the following inequalities:

$$\Delta(G)\leq \pi'(G)\leq \chi'(G) \leq \Delta(G)+1.$$

This tells us that the gap between $\pi(G)$ and $\chi(G)$ is at most~1 when $G$ is a line graph, i.e., Corollary~\ref{cor:biggap} does not hold for line graphs. Nevertheless, we can again ask if there exists a graph $G$ such that $\pi'(G) < \chi'(G)$. The answer remains positive and our example is the the Kochen-Specker set given in \cite{CEG.96}. Essentially, we show that the given set $S$ can also be assigned to edges of a graph $G(S)$ on 9 vertices. Each vertex of our graph represents a distinct orthogonal base $B_i$ of $\bb{R}^4$, and two vertices are adjacent if they share a common vector. We present the bases in the table below (the symbol $\ov$ denotes $-1$), and we use the same color to represent the same vector. Observe that each vector appears exactly twice. The graph itself is drawn in Figure~\ref{fig:KSgraph}.

\begin{table}[htb]
\begin{center}
\begin{tabular}{|c|c|c|c|c|c|c|c|c|}
\hline
$\mathbf{B_1}$ & $\mathbf{B_2}$ & $\mathbf{B_3}$ & $\mathbf{B_4}$ & $\mathbf{B_5}$ & $\mathbf{B_6}$ & $\mathbf{B_7}$ & $\mathbf{B_8}$ & $\mathbf{B_9}$\\
\hline
\tc{green}{0001} & \tc{green}{0001} & \tc{red}{$1\ov 1\ov$} & \tc{red}{$1\ov 1\ov$} & \tc{brown}{0010} & \tc{orange}{$1\ov\ov 1$} & \tc{blue}{$11\ov1$} & \tc{blue}{$11\ov1$} & \tc{purple}{$111\ov$}\\\hline

\tc{brown}{0010} & \tc{magenta}{0100} & \tc{orange}{$1\ov\ov 1$} & \tc{olive}{1111} & \tc{magenta}{0100} & \tc{olive}{1111} & \tc{purple}{$111\ov$} & \tc{yellow}{$\ov111$} & \tc{yellow}{$\ov111$} \\\hline

\tc{cyan}{1100} & \tc{teal}{1010} & \tc{cyan}{1100} & \tc{gray}{$10\ov0$} & \tc{darkgray}{1001} & \tc{pink}{$100\ov$}& \tc{red!50}{$1\ov00$} & \tc{teal}{1010} & \tc{darkgray}{1001} \\\hline

\tc{red!50}{$1\ov00$} & \tc{gray}{$10\ov0$} & \tc{blue!50}{0011} & \tc{green!50}{$010\ov$} & $01\ov0$ & \tc{pink}{$100\ov$} & \tc{blue!50}{0011}& \tc{green!50}{$010\ov$} & $01\ov0$ \\\hline
\end{tabular}
\caption{Kochen-Specker set $S$ on $\bb{R}^4$ used to construct $G(S)$.}\label{table:bases}
\end{center}
\end{table}

\begin{figure}[h!]
\centering \includegraphics[scale=0.5]{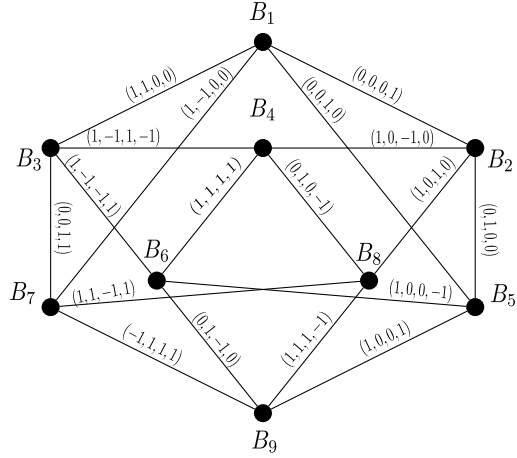}
\caption{Graph $G(S)$ related to the bases presented in Table~\ref{table:bases}. Each base is a vertex of $G(S)$, and two bases are adjacent if and only if they have a vector in common.}
\label{fig:KSgraph}
\end{figure}

%
%
%
%
%

\begin{theorem}
There exists a graph $G$ such that \[4=\pi'(G) < \chi'(G)=5.\]
\end{theorem}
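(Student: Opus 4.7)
The plan is to take $G = G(S)$ from Table~\ref{table:bases} and Figure~\ref{fig:KSgraph}, which has $9$ vertices, and to exploit two structural facts that one verifies directly from the table: (i) every vector of $S$ lies in exactly two of the bases (each color label appears exactly twice), and (ii) no two distinct bases share more than one vector. Together these say that the assignment $B_iB_j \mapsto$ (the unique vector in $B_i \cap B_j$) is a bijection between $E(G)$ and $S$, and in particular $G$ is $4$-regular.

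For $\pi'(G) \le 4$, I would use exactly this bijection as the edge-coloring. Setting $c(B_iB_j)$ to be the common vector of $B_i$ and $B_j$, any two adjacent edges $B_iB_k$ and $B_jB_k$ receive two distinct vectors of the orthonormal basis $B_k$, hence orthogonal ones. Combined with the general lower bound $\Delta(G) = 4 \le \pi'(G)$ noted earlier, this yields $\pi'(G) = 4$. Vizing's theorem then gives $\chi'(G) \le \Delta(G)+1 = 5$, so only the strict lower bound on $\chi'(G)$ remains.

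For the crucial inequality $\chi'(G) \ge 5$, my approach is a parity obstruction. Suppose towards a contradiction that $G$ admits a proper $4$-edge-coloring. Since $G$ is $4$-regular, all four colors must appear at every vertex, so every color class is a perfect matching of $G$; but $|V(G)| = 9$ is odd, so no perfect matching can exist --- contradiction. This is essentially the Kochen-Specker argument of Cabello et al.\ translated to edges: the indicator of a single color class is a $\{0,1\}$-function on $S$ summing to $1$ on each of the $9$ bases, yet by (i) each vector is counted twice in the total sum, forcing it to be even rather than $9$. The only real obstacle is the structural verification of (i)--(ii) from the table, without which the above one-line arguments would not apply; everything else is essentially forced by $G$ being $4$-regular on an odd number of vertices, which is precisely the feature that the Kochen-Specker construction supplies.
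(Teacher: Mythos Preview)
Your proposal is correct and follows essentially the same route as the paper's proof: both construct $G=G(S)$ on the nine bases, observe $4$-regularity from the table, use the common vectors as an orthogonal $4$-edge-coloring, and rule out a $4$-edge-coloring via the odd-vertex parity obstruction to a perfect matching. Your explicit statement of condition~(ii) (no two bases share more than one vector) is a welcome bit of extra care, since it is what guarantees the graph is simple and the edge-to-vector assignment is well defined.
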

\begin{proof}

Let $G$ be the graph on vertices $\{B_1,\ldots,B_9\}$ given above and edge set $\{B_iB_j\mid B_i\cap B_j\neq \emptyset\}$. First, note that $G$ is 4-regular, since each base has exactly four vectors and each vector appears in exactly two bases; hence we get the upper and lower bounds. Furthemore, the vectors themselves give us an orthogonal 4-edge-coloring, since the bases related to the vertices are orthogonal (i.e., vectors incident to the same vertex are orthogonal). It remains to show that $E(G)$ cannot be properly colored with~4 colors. 
Suppose otherwise and let $f$ be a 4-edge-coloring of $G$. Since the graph is 4-regular, we get that each vertex $v \in V(G)$ is incident to every color. In particular $f^{-1}(1)$ would be a perfect matching, a contradiction since $|V(G)|$ is an odd number. 
\end{proof}

The theorem above of course also tells us of the existence of a graph $H$ for which $4=\pi(H) < \chi(G) = 5$ (consider $H=L(G)$). A good question is whether we can have strict inequalities for graphs with orthogonal number smaller than~3. The answer is positive, as we state in the theorem below.
\begin{theorem}\label{theo:nogap}
If $G$ is a bipartite graph, then $\pi(G) = \chi(G)$; and if $\pi(G)=2$ then $G$ is bipartite. Furthermore, there exists a graph $G$ such that \[3 = \pi(G) < \chi(G)=4\].
\end{theorem}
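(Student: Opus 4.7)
For the first two assertions the plan is to argue directly from the geometry of $\mathbb{R}$ and $\mathbb{R}^2$. If $G$ is bipartite with at least one edge then $\chi(G) = 2$, and by~(\ref{eq:pileqchi}) it suffices to show $\pi(G) \ge 2$; this is immediate because any two nonzero reals are non-orthogonal, so an edge prohibits a $1$-dimensional orthogonal representation (if $G$ is edgeless, both invariants equal $1$). For the converse, let $f : V(G) \to \mathbb{R}^2 \setminus \{0\}$ be an orthogonal $2$-coloring; since two nonzero vectors of $\mathbb{R}^2$ are orthogonal iff their $1$-dimensional spans are perpendicular, in any connected component $C$ with base vertex $v_0$ an induction along shortest paths from $v_0$ shows every $v \in C$ has $f(v) \in \mathrm{span}(f(v_0)) \cup \mathrm{span}(f(v_0))^\perp$. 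Partitioning $C$ by which of these two lines contains $f(v)$ yields a bipartition, since two vertices on the same line cannot be orthogonal and hence cannot be adjacent.

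For the third assertion the plan is to apply the earlier theorem with $d = 3$ to a Kochen-Specker set $S \subseteq \mathbb{R}^3$ containing an orthonormal basis (such sets are known, e.g.\ the original $117$-vector construction of Kochen and Specker). This yields $\pi(K(S)) = 3$ and $\chi(K(S)) \ge 4$. To force $\chi$ to equal exactly $4$, I would iteratively delete vertices: since removing a single vertex drops $\chi$ by at most $1$, the sequence of chromatic numbers along the deletion process must hit every integer between $\chi(K(S))$ and $0$, in particular the value $4$. Let $G$ be the induced subgraph reached at that point. Restricting the identity orthogonal representation of $K(S)$ to $G$ gives $\pi(G) \le 3$, while $\chi(G) = 4 > 2$ forces $G$ to be non-bipartite, so by the second assertion $\pi(G) \ge 3$; hence $\pi(G) = 3 < \chi(G) = 4$.

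The main obstacle is really the third part. Every small $4$-chromatic graph I tried by hand --- odd wheels, the Moser spindle, the Gr\"otzsch graph, $\overline{C_7}$ --- turns out to have $\pi \ge 4$, because its local structure forces the image of any orthogonal $3$-representation to span all of $\mathbb{R}^3$ too rigidly, leaving no room to satisfy the remaining orthogonality constraints. Kochen-Specker sets are, by design, configurations that encode precisely this kind of rigidity obstruction at a global scale, so routing through them appears to be the cleanest path, at the cost of producing an implicit rather than explicit witness graph.
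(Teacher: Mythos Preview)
Your proof is correct. For the first two assertions your argument coincides with the paper's (the paper phrases the second as ``an odd cycle cannot admit an orthogonal $2$-coloring,'' which is the contrapositive of your bipartition argument). For the third assertion you take a genuinely different route: the paper works with the infinite orthogonality graph $K(\mathbb{S}^2)$ of the unit sphere, imports the result of~\cite{GS.88} that $\chi(K(\mathbb{S}^2))=4$, and then invokes the De~Bruijn--Erd\H{o}s theorem to extract a finite subgraph with the same chromatic number. You instead feed a finite Kochen--Specker set in $\mathbb{R}^3$ into the paper's own Theorem~1 and then trim the resulting graph down to $\chi=4$ by iterated vertex deletion. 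Your route recycles the paper's earlier machinery rather than an external chromatic-number computation, and is in principle constructive (the $117$-vector set is explicit and the deletion process is finite, if computationally heavy); the paper's route is shorter but entirely non-constructive, and indeed it later stresses that no explicit finite witness is known. Both arguments share the same endgame --- using the second assertion to rule out $\pi(G)\le 2$ once $\chi(G)=4$ is secured.
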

\begin{proof}
For the first part, note that $\pi(G)\ge 2$ when $E(G)\neq \emptyset$, which implies $\pi(G)=\chi(G)$ when $G$ is bipartite.  Also, observe that for every $u\in \bb{R}^2$, up to reflection, there exists a unique unit vector $v\in \bb{R}^2$ orthogonal to $u$, and this implies that an odd cycle cannot admit an orthogonal 2-coloring. 

Now, for the second part, let $\mathbb{S}^{2}\subset \mathbb{R}^{3}$ be the unit sphere and $H=K(\mathbb{S}^{2})$ be its orthogonality graph. By construction, we know that $\pi(H)=3$, and in~\cite{GS.88} it has been proven that $\chi(H)=4$. Applying the De Bruijn-Erd\H{o}s Theorem~\cite{E.50}, that states that $\chi(X) = \max \{\chi(Y)\mid~ Y\subset X \text{ is finite}\}$ for every infinite graph $X$, we get that a graph with $3=\pi(G)<\chi(G)=4$ exists. 
\end{proof}

We then know that the gap occurs starting at graphs with orthogonal number equal to~3. Now, we can ask whether the same holds for the orthogonal index. Again because every $u\in \bb{R}^2$ is orthogonal to a unique line in $\bb{R}^2$, we get that if $G$ has an orthogonal 2-edge-coloring, then $G$ must be a collection of paths and even cycles, in which case $G$ has also a 2-edge-coloring. This means that the gap again cannot occur when $\pi'(G)\le 2$. As we have seen, the proof of the second part of the theorem above is non-constructive, hence it is not known whether such a graph can be a line graph. 
We then pose the following question.

\begin{question}\label{question}
Is there a graph $G$ such that $3= \pi'(G) < \chi'(G)=4$?
\end{question}

Clearly, such a graph would also give an explicit example in the proof of the second part of Theorem~\ref{theo:nogap}. Therefore, if it exists, finding such a graph should not be easy. We also comment on some ties with the Four Color Theorem which suggest that finding such a graph is indeed a hard task. 

Observe that, if $G$ is a 2-connected planar 3-regular graph, then by the Four Color Theorem and Tait's Theorem, we get that the answer to the question above is negative considering $G$, i.e., every 2-connected planar 3-regular graph is 3-edge-colorable and hence cannot give a positive answer to the question. Tait's Theorem says that the Four Color Theorem is equivalent to stating that every 3-regular 2-connected planar graph is 3-edge-colorable. Tait initially believed that every 3-regular 2-connected planar graph had to be hamiltonian, and if true, applying his theorem would give a beautiful proof for the Four Color Theorem. Indeed, if $G$ is a 3-regular hamiltonian graph, then we must have $\chi'(G)=3$: it suffices to remove a perfect matching of $G$, and coloring the hamiltonian cycle with two colors (the number of vertices in the graph must be even since $G$ is 3-regular and the sum of the degrees is even). Even though the gap on his proof was found around the time of announcement in 1878, it was not until 1946 that an explicit example was found by Tutte~\cite{T.46}. The presented graph, named after Tutte, has 46 vertices and disproved Tait's Conjecture about all such graphs being hamiltonian. Since then, numerous examples have been found, many with the help of Grinberg's Theorem~\cite{G.68}, and currently it is known that the smallest such graph has~38 vertices~\cite{HM.86}. 

Even though we know that $\pi'(G) = 3$ when $G$ is a 2-edge connected 3-regular planar graph, because of the Four Colour Theorem and Tait's Theorem as mentioned earlier, an interesting question is whether there exists a purely algebraic proof for this. Observe that, if so, then a negative answer to our question would give an alternative proof for the Four Color Theorem.

Besides the broad research around non-hamiltonian 3-regular 2-con\-nected graphs, a lot of work has been done on 3-regular 4-edge-chromatic graphs, which are also classified as \emph{Class 2} graphs according to Vizing's Theorem. Observe that these graphs are candidates for a positive answer to our question. Unfortunately, though, these graphs are also very hard to find, so much so that the minimal such graphs have been baptized \emph{snarks} by Martin Gardner~\cite{G.76}, evoking a Lewis Carroll's poem's rare creature. The Petersen graph was the first found snark, discovered in 1898, and up until 1975, when Isaacs constructed two infinite families of these graphs~\cite{I.75}, apart from the Petersen graph, only three other snarks were known~\cite{B.46,Tutte.Descartes.48,S.73}. Nowadays, these elusive creatures are the object of much reasearch and are known to play a central role in a number of conjectures~\cite{CMRS.98}. 

Finally, recall that, as seen in the proof of Theorem~\ref{theo:nogap}, the unit sphere $\mathbb{S}^{2}\subset \mathbb{R}^{3}$  is such that $3=\pi(H)<\chi(H)=4$, where $H=K(\mathbb{S}^{2})$. Observe that if $f$ is an orthorgonal edge-coloring of $G$, and $S = f(E(G))$, then there is a homomorphism from $L(G)$ to $K(S)$; also, we know that $S$ can be considered to be on $\mathbb{S}^2$, since it suffices to normalize the vectors in $S$. In~\cite{GS.88}, the authors also prove that the orthogonal graph of the rational sphere $\mathbb{Q}^2\subset\mathbb{R}^3$ is 3-chromatic. Therefore, if $G$ is a graph that answers Question~\ref{question} positively, then $L(G)$ is 4-chromatic and therefore cannot have a homomorphism to $K(\mathbb{Q}^2)$. Because a computer can only find representations rational coordinates, we get that no positive answer can be found with computer aid. We emphasize once more that no explicit finite graph $G$ with $3=\pi(G)<\chi(G)=4$ is known, and that its existence is only ensured by the De Brujin-Erd\H{o}s Theorem~\cite{E.50}. Below, we summarize the known discussed necessary conditions when restricted to cubic graphs.

\begin{proposition}
Let $G$ be a cubic 2-connected graph such that $3=\pi'(G)<\chi'(G)=4$. Then $G$ is non-hamiltonian and non-planar, and every orthogonal 3-coloring of $E(G)$ must use vectors with irrational components.
\end{proposition}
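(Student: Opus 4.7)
The plan is to handle each of the three claims by contradiction, invoking machinery that has already been set up in the preceding discussion.

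First, for non-hamiltonicity, I would suppose $G$ has a Hamilton cycle $C$. Since $G$ is cubic, $|V(G)|$ is even, hence $|E(C)|$ is even and $C$ admits a proper 2-edge-coloring. The remaining edges $E(G)\setminus E(C)$ form a perfect matching (each vertex has exactly one incident edge outside $C$) and can receive a third color. This produces a proper 3-edge-coloring, giving $\chi'(G)\le 3$ and contradicting $\chi'(G)=4$. This is exactly the classical argument already recalled in the Tait's Theorem paragraph above.

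Next, for non-planarity, I would combine Tait's Theorem with the Four Color Theorem as spelled out in the same paragraph: every 2-connected planar 3-regular graph is 3-edge-colorable. If $G$ were planar, this would force $\chi'(G)\le 3$, contradicting $\chi'(G)=4$.

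For the irrational-components claim, I would assume for contradiction that some orthogonal 3-edge-coloring $f\colon E(G)\to \bb{Q}^3\setminus\{0\}$ uses only rational vectors. By the homomorphism/normalization remark just before the proposition, $f$ induces a homomorphism from $L(G)$ into $K(\bb{Q}^2)$, the orthogonality graph of the rational unit sphere; since \cite{GS.88} gives $\chi(K(\bb{Q}^2))=3$, this forces $\chi'(G)=\chi(L(G))\le 3$, a contradiction. The hard part will be justifying the normalization step: normalizing a rational vector typically introduces irrational coordinates, so a priori the composed homomorphism does not land in $K(\bb{Q}^2)$. I would handle this either by invoking a slightly more general form of the Godsil--Scullard result, namely on the orthogonality graph of all nonzero rational directions in $\bb{R}^3$ (the projectivization of $\bb{Q}^3\setminus\{0\}$, whose chromatic number equals that of $K(\bb{Q}^2)$), or by transferring a known 3-coloring of the rational unit sphere to this larger graph directly. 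Modulo this technicality, all three claims are immediate consequences of the discussion already laid out in the paper.
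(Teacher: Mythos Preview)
Your argument is essentially the paper's own: the paper does not give a formal proof of this proposition but explicitly presents it as a summary of the preceding discussion, and you have reproduced precisely that discussion for each of the three claims. In particular, the Hamiltonicity and planarity parts match the paper line for line.

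On the third claim you go further than the paper does, and rightly so. The paper's discussion simply asserts that a rational orthogonal edge-coloring yields a homomorphism from $L(G)$ into $K(\mathbb{Q}^2)$ and then invokes \cite{GS.88}. You correctly observe that normalization of a nonzero rational vector need not land on the rational unit sphere (e.g.\ $(1,1,0)$), so the composed map does not obviously take values in $K(\mathbb{Q}^2)$; this is a genuine gap in the paper's informal argument that you have identified and proposed to repair. Either of your fixes---working with the orthogonality graph on rational directions in $\mathbb{R}^3$, or transporting the known $3$-coloring from the rational sphere to that larger graph---is a legitimate way to close it; the first is cleanest, since vectors on the same rational line have identical neighborhoods in the orthogonality graph, so the chromatic number question reduces to one the cited result can address. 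In short, your write-up is faithful to the paper's intended argument and is in fact more careful than the paper on this point.
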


\bibliographystyle{plain}

\bibliography{biblio}


%
%
%

\end{document}